\documentclass[11pt]{article}

\usepackage[letterpaper,margin=1in]{geometry}
\usepackage[utf8]{inputenc} 
\usepackage[T1]{fontenc}    
\usepackage[colorlinks=true]{hyperref}       
\usepackage{url}            
\usepackage{booktabs}       
\usepackage{amsfonts}       
\usepackage{amsmath,dsfont,amsthm,mathtools,amssymb}
\usepackage{subcaption}

\usepackage{pgfplots}
\pgfplotsset{compat=1.18}

\newcommand{\calP}{\mathcal{P}}

\newcommand{\calT}{\mathcal{T}}

\newcommand{\calX}{\mathcal{X}}
\newcommand{\calY}{\mathcal{Y}}

\newcommand{\xvec}{\boldsymbol{x}}
\newcommand{\yvec}{\boldsymbol{y}}

\newcommand{\Xvec}{\boldsymbol{X}}
\newcommand{\Yvec}{\boldsymbol{Y}}

\newcommand{\bbE}{\mathbb{E}}

\newcommand{\indic}{\mathds{1}}

\newcommand{\typical}[2]{\calT^{#1}_{#2}}
\newcommand{\tiltP}[2]{#1^{(#2)}}

\newcommand{\better}{\succ}
\newcommand{\bettereq}{\succeq}

\newtheorem{theorem}{Theorem}

\newtheorem{proposition}{Proposition}

\newtheorem{remark}{Remark}

\DeclareMathOperator*{\supp}{supp}

\hypersetup{
    colorlinks,
    linkcolor=blue,
    citecolor=red,
    urlcolor=blue
}

\usepackage[
backend=biber,
style=alphabetic,
maxbibnames=15,
maxalphanames=5,
minalphanames=3,
doi=false,
isbn=false,
eprint=false,
backref=true,
]{biblatex}
\title{Tail Exponents of Conditional Guesswork via the Method of Types}

\author{%
    Adway Girish$^*$, Andreina Patrizia Motter$^\dagger$, Emre Telatar$^*$ \\
{\small    $^*$School of Computer and Communication Sciences, EPFL, 
Switzerland}\\
{\small    $^\dagger$Federal Department of Foreign Affairs, Switzerland}\\
 {\small   \texttt{adway.girish@epfl.ch},\,\texttt{andreina.motter@alumni.epfl.ch},\,\texttt{emre.telatar@epfl.ch}}
}

\begin{document}

\maketitle

\begin{abstract}
We study the problem of guessing a realization of an i.i.d.~random 
sequence given element-wise correlated side-information.  
We use type-counting to provide estimates of the tail probabilities of 
the number of guesses for the case without side-information, which was 
shown earlier through large-deviation techniques. We then extend the 
same counting argument to the conditional setting, obtaining new 
explicit expressions for the corresponding guesswork exponents as 
divergences involving conditional tilted distributions. Finally, we 
provide an application of these exponents to brute-force password 
guessing with side-information.
\end{abstract}

\section{Introduction} \label{sec: intro}
Consider the problem of guessing a realization of a finite-alphabet random 
variable, first studied by Massey~\cite{massey}: $X$ is generated according 
to 
a probability distribution $P_X$, and the guesser is allowed to ask 
questions 
of the form ``Is $X = x$?'' for different choices of $x$, until they 
correctly 
guess the realization $X$.  The goal is to do so with ``as few guesses as 
possible''. If $P_X$ is known, the optimal strategy is to guess symbols in 
decreasing order of their probability. Let $G(x)$ denote the ``guesswork'', 
i.e., the number of attempts needed to correctly guess $x$.  If the guesser 
also has access to side-information $Y$ that is correlated with $X$, then 
the 
optimal strategy is to guess symbols in decreasing order of their 
conditional 
probability, and let $G(x|y)$ denote the (conditional) guesswork for $x$ on 
observing $y$.  When referring to guesswork in the remainder of the paper, 
we 
always consider the optimal strategy of guessing symbols from most likely 
to 
least likely.  Guesswork is a natural operational quantity in settings 
where 
one gets an immediate binary ``yes''/``no'' feedback, such as brute-force 
password guessing~\cite{brute_botnets,brute_decent,guess_storage}, 
guessing-noise decoding~\cite{duffy_grand},
computing lower bounds to the computational effort of sequential decoding 
(via an oracle that provides immediate feedback)~\cite{arikan}, and so on.

Now suppose that the object to be guessed is an $n$-length sequence $\Xvec$ 
that is correlated with side-information $\Yvec$ available to the guesser, 
where the entries $(X_i,Y_i)$ are generated i.i.d.\ from a distribution 
$P_{XY}$.
Then the guesswork grows exponentially in $n$, and Arikan~\cite{arikan} 
characterized the exponent of the moment $\bbE[G(\Xvec|\Yvec)^{\rho}]$ in 
terms of the conditional R\'enyi entropy of order $1/(1+\rho)$.  A natural 
quantity to study is therefore the tail probabilities of the guesswork, in 
particular, for what values of $h$ the quantities $\Pr\{G(\Xvec|\Yvec) < 
2^{nh}\}$ and $\Pr\{G(\Xvec|\Yvec) > 2^{nh}\}$ decay exponentially in $n 
\to \infty$ and with what exponent.  

Such large-deviation results have been explored previously in the 
literature. Christiansen and Duffy~\cite{duffy_ldp} showed for $\Xvec$ 
not 
necessarily i.i.d., that $n^{-1}\log G(\Xvec)$, without 
side-information, 
satisfies a large-deviations principle (LDP).  However, it requires an 
assumption on the scaled cumulant generating function (sCGF) that 
clearly holds 
for i.i.d.~sources, but is hard to verify in general.  For 
i.i.d.~$\Xvec$, 
Beirami et al.~\cite{beirami_types,beirami_ldp} explicitly 
characterized the 
exponents of $\Pr\{G(\Xvec)\gtrless 2^{nh}\}$ as a function of $h$, in 
terms of 
information-theoretic quantities involving ``tilted'' distributions.  
To do so, 
they developed a theory of ``weakly typical tilted sets'' that requires 
several 
technical results.
A similar LDP for the conditional case with general sources was 
established by Li~\cite{li_cond_ldp}, but again with an assumption on 
the sCGF similar to Christiansen and Duffy~\cite{duffy_ldp}, that is 
hard to verify for non-i.i.d.~sources.  Moreover, even for 
i.i.d.~sources, the resulting rate function is not explicitly evaluated 
as Beirami et al.~\cite{beirami_types,beirami_ldp} do for the 
unconditional case.

In this paper, we consider i.i.d.~sources and explicitly derive 
the exponents of $\Pr\{G(\Xvec|\Yvec)\gtrless 2^{nh}\}$ as a function of 
$h$ using the method of types. In addition to giving a short and direct 
proof, this gives the exponent immediately as the solution to an 
entropy-constrained divergence minimization problem and makes clear which 
empirical distributions of $(\Xvec,\Yvec)$ contribute the most to the tail 
event.  Finally, since the argument uses explicit type-class bounds, it can 
also be used to obtain finite-$n$ upper and lower bounds to the probability 
that match up to polynomial factors in $n$ without requiring approximation 
arguments, unlike the large-deviations machinery which only yields 
asymptotic limits~\cite{duffy_ldp,beirami_ldp,li_cond_ldp} (we do not 
provide these bounds here to keep our presentation concise).
We illustrate the use of these explicit characterizations through an 
application to brute-force password guessing with side-information.

\section{Preliminaries}
For completeness, we first provide a review of our main technical tool, the 
method of types~\cite{csiszar2011information,csiszar_mot}.
We then introduce the technical aspects of the guessing problem and define 
an ``order'' on these types, which will prove useful for the analysis in 
Section~\ref{sec: results}.

\subsection{Notation}
All logarithms and exponents are taken to base 2, i.e., we write $a = 2^b$ 
as $a = \exp(b)$ and $b = \log(a)$.
We use calligraphic letters $\calX,\calY$ to denote finite sets. 
Given $\calX$ of cardinality $|\calX|$, let $\calP(\calX)$ denote the set 
of 
all probability distributions on $\calX$. When $X$ is a random variable 
taking 
values on $\calX$ with distribution $P \in \calP(\calX)$, we use $X$ and 
$P$ 
interchangeably in expressions involving information measures.  The entropy 
of 
$X \sim P$ is denoted by $H(P) = H(X) = -\sum_x P(x)\log P(x)$.  For 
distributions $P,Q \in \calP(\calX)$, the KL divergence is $D(Q \| P) = 
\sum_x 
Q(x)\log\big(Q(x)/P(x)\big)$ and the cross entropy is $H(Q \| P) = H(Q) + 
D(Q 
\| P)$.  Given a joint distribution $P_{XY} \in \calP(\calX \times \calY)$, 
we 
also define the conditional entropy $H(X \mid Y) = -\sum_{x,y} P_{XY}(x,y) 
\log 
P_{X|Y}(x|y)$, which we also denote as $H_{P_{XY}}(X \mid Y)$ to make the 
joint 
distribution clear.  Also given a  distribution $Q_{XY}$, we define the 
conditional KL divergence $D(Q_{X|Y} \| P_{X|Y} | Q_Y) = \sum_{x,y} 
Q_{XY}(x,y)\log\big(Q_{X|Y}(x|y)/P_{X|Y}(x|y)\big)$, and the conditional 
cross 
entropy $H(Q_{X|Y} \| P_{X|Y} | Q_Y) = H_{Q_{XY}}(X|Y) + D(Q_{X|Y} \| 
P_{X|Y} | 
Q_Y)$.  We may drop subscripts to simplify notation when the distribution 
is 
clear from context. 
We use boldface letters to denote $n$-length vectors, such as $\xvec \in 
\calX^n$, which has components $x_1,\dots,x_n$.  We use $\supp(P)$ to 
denote the support of a distribution.

Given a distribution $P$ on $\calX$ and $\lambda > 0$, we define the 
\emph{tilted} distribution $\tiltP{P}{\lambda}$ as\vspace*{-8pt}
\begin{equation*}
	\tiltP{P}{\lambda}(x) = \frac{P(x)^{\lambda}}{\sum_{x'} 
		P(x')^{\lambda}}.
\end{equation*}
Similarly, given a conditional distribution $P_{X|Y}$ and $\lambda >0$, the 
\emph{conditional tilted} distribution $\tiltP{P_{X|Y}}{\lambda}$ is given 
by
\begin{equation*}
	\tiltP{P_{X|Y}}{\lambda}(x \mid y) = \frac{P_{X|Y}(x \mid 
		y)^{\lambda}}{\sum_{x'} P_{X|Y}(x' \mid y)^{\lambda}}.
\end{equation*}

\subsection{Method of types}
Given a sequence $\xvec \in \calX^n$, denote the \emph{type} of $\xvec$ by 
the probability distribution $Q_{\xvec} \in \calP(\calX)$ with 
$Q_{\xvec}(a)$ equal to the relative frequency of $a \in \calX$ among 
$x_1,\dots,x_n$.  Let $\calP_n(\calX)$ denote the set of all $n$-types, 
i.e., types of $n$-length sequences.  The number of $n$-types is at most 
$(n+1)^{|\calX|}$. 
Given a probability distribution $P \in \calP(\calX)$, let $\typical{n}{P} 
\coloneqq \{\xvec \in \calX^n : Q_{\xvec} = P\}$ denote the type class, the 
set of all $n$-length sequences of type $P$. The cardinality of the type 
class is bounded as $|\typical{n}{P}| \doteq \exp(n\, H(P))$, where 
$\doteq$ means that the left and right-hand sides are equal up to 
polynomial factors in $n$. For $\Xvec$ generated i.i.d.\ according to $P$, 
we have $\Pr\{\Xvec = \xvec\} = \exp[-n\,H(Q_{\xvec} \| P)]$. 

Similarly define the joint type $Q_{\xvec\yvec} \in \calP(\calX \times 
\calY)$ and for a distribution $P_{XY} \in \calP(\calX \times \calY)$, 
define the joint type class $\typical{n}{P_{XY}}$.  Given $\yvec \in 
\calY^n$ and a conditional distribution $V$, define the \emph{conditional 
	type class} to be $\typical{n}{V}(\yvec) \coloneqq \{\xvec \in \calX^n : 
(\xvec, \yvec) \in \typical{n}{Q_{\yvec}V}\}$. The cardinality of the 
conditional type class is bounded as $|\typical{n}{P_{X|Y}}(\yvec)| \doteq 
\exp(n\, H(X|Y))$, where the conditional entropy $H(X|Y)$ is computed 
according to the joint distribution induced by the type $Q_{\yvec}$ and 
$P_{X|Y}$. For $\Xvec$ generated from $\yvec$ conditionally independently 
according to the conditional distribution $P_{X|Y}$, we have $\Pr\{\Xvec = 
\xvec \mid \Yvec = \yvec\} = \exp[-n\,H(Q_{\xvec | \yvec} \| P_{X|Y} | 
Q_{\yvec})]$ where $Q_{\xvec|\yvec}$ is the conditional type induced by 
$(\xvec,\yvec)$, given by $Q_{\xvec|\yvec}(x|y) = Q_{\xvec\yvec}(x,y) / 
Q_{\yvec}(y)$.  

\subsection{Guessing and orders on types}  \label{sec: guess_order}
Suppose that the object to be guessed is $\Xvec = X_1,\dots,X_n$ generated 
i.i.d.~from a distribution $P \in \calP(\calX)$.  Then, we guess $\xvec \in 
\calX^n$ before $\xvec'$ if $P_{\Xvec}(\xvec) \geq P_{\Xvec}(\xvec')$, with 
ties broken arbitrarily.  Note that $P_{\Xvec}(\xvec) = 
\exp[-n\,H(Q_{\xvec} \| P)]$, and hence, define the total order $\bettereq$ 
between types as $Q \bettereq Q'$ to mean $H(Q \| P) \leq H(Q' \| P)$, 
i.e., 
\begin{equation*}
	\sum_{x} [Q(x) - Q'(x)] \log\frac1{P(x)}  \leq 0.
\end{equation*}
Similarly, define the strict order $\better$ if the above inequality is 
strict. Note that $\better$ and $\bettereq$ depend on $P$.  
The optimal guessing strategy is to go through types in decreasing order 
(picking arbitrarily among $Q$ and $Q'$ if $H(Q \| P) = H(Q' \| P)$) and 
guessing sequences within each type arbitrarily. Recall that $G(\xvec)$ 
denotes the number of guesses needed (guesswork) to correctly guess $\xvec$ 
with this strategy.  

Now consider a conditional version of the problem, where we have access to 
$\Yvec$ that is correlated element-wise with $\Xvec$, i.e., $(X_i,Y_i)$ is 
generated i.i.d.~from a distribution $P_{XY} \in \calP(\calX \times 
\calY)$.  Having observed $\yvec$, we guess $\xvec$ before $\xvec'$ if 
$P_{\Xvec|\Yvec}(\xvec|\yvec) \geq P_{\Xvec|\Yvec}(\xvec'|\yvec)$.  Note 
that $P_{\Xvec|\Yvec}(\xvec|\yvec) = \exp[-n\,H(Q_{X|Y} \| P_{X|Y} \mid 
Q_{\yvec})]$, and hence, we define the total order between conditional 
types as $Q_{X|Y} \bettereq Q'_{X|Y}$ to mean $H(Q_{X|Y} \| P_{X|Y} \mid 
Q_{\yvec}) \leq H(Q'_{X|Y} \| P_{X|Y} \mid Q_{\yvec})$, i.e., 
\begin{equation*}
	\sum_{x,y} [Q_{XY}(x,y) - Q_{XY}'(x,y)] \log\frac1{P_{X|Y}(x|y)}  \leq 
	0,
\end{equation*}  
where $Q_{XY}(x,y) = Q_{X|Y}(x|y)Q_{\yvec}(y)$ is the joint distribution 
induced by the conditional type $Q_{X|Y}$ and the type $Q_{\yvec}$.  Also 
define the strict order similarly, and note that $\better$ and $\bettereq$ 
depend on $P_{X|Y}$ and $Q_{\yvec}$.
This order on conditional types also induces a partial order on joint types 
in 
the following way: we write $Q_{XY} \bettereq Q'_{XY}$ to mean that they 
have 
the same $Y$-marginals, i.e., $Q_Y = Q'_Y$, and their conditional types 
satisfy 
$Q_{X|Y} \bettereq Q'_{X|Y}$.  Recall that $G(\xvec | \yvec)$ denotes the 
guesswork to correctly guess $\xvec$ having observed $\yvec$ with the 
optimal 
strategy, i.e., guessing sequences from type $Q$ before $Q'$ if $Q 
\bettereq 
Q'$.

\section{Tail exponents of guesswork} \label{sec: results}

For a pair $(\Xvec,\Yvec)$ with entries $(X_i, Y_i)$ generated i.i.d.~from 
a distribution $P_{XY} \in \calP(\calX \times \calY)$, we are interested in 
the tail exponents of conditionally guessing $\Xvec$ from $\Yvec$, i.e.,
\begin{equation} \label{eqn: cond_exponents}
	e_{X|Y}^{\gtrless}(h) \coloneqq  \lim_{n \to \infty} -\tfrac1n \log 
	\Pr\{G(\Xvec | \Yvec) \gtrless 2^{nh}\},
\end{equation}
where $\gtrless$ can be either $>$ or $<$.  Also suppose without loss of 
generality that $P_X$ and $P_{Y}$ have full support.  In addition, we 
assume that $P_X$ and each $P_{X|Y}(\cdot | y)$ have unique maximizers. 
Analogous results can be obtained even without this assumption, but this 
simplifies the presentation of our results.

The unconditional version of this problem, i.e., the exponents of 
$\Pr\{G(\Xvec)\gtrless 2^{nh}\}$, was explicitly characterized by Beirami 
et 
al.~\cite{beirami_ldp}.  We first recover their result directly using a 
simple 
type-counting argument, which is well-suited when the sets are finite and 
the 
distribution is i.i.d., as in our setting (and indeed, also the setting 
that 
Beirami et al.~\cite{beirami_ldp} consider) and then apply the 
same approach to conditional guesswork.  Li~\cite{li_cond_ldp} similarly 
to~\cite{beirami_ldp} established a general LDP for conditional 
guesswork, but 
did not explicitly derive the exponent in the i.i.d.~case as we do 
below. 

Our type-counting proof strategy is to explicitly evaluate the probability 
that $\Xvec \sim P$ i.i.d.\ occurs in the $2^{nh}$ sequences with the 
highest probability.  Since the probability of such an $\Xvec$ only depends 
on its type $Q$, this is the same as checking that $Q$ is such that the 
total number of sequences among all types $Q' \better Q$ is smaller or 
greater than $2^{nh}$.  Finally, since the number of types is polynomial in 
$n$, the exponent of the probability is determined by the type $Q$ 
satisfying the above constraint that has the smallest $D(Q \| P)$ (recall 
that $\Pr\{\Xvec \in \typical{n}{Q}\} \doteq \exp(-n\, D(Q\| P))$).  To 
solve the resulting optimization problem, we show that the solution is 
unchanged if we instead consider a relaxed version of the same problem, 
whose solution is easily computed to be the tilted distribution 
$\tiltP{P}{\lambda_h}$ with tilting parameter $\lambda_h$ such that 
$H(\tiltP{P}{\lambda_h}) = h$.  Thus, the tilted distribution is exactly 
the type whose corresponding type class has the largest probability among 
all types that appear in the first $2^{nh}$ guesses.

\vspace*{5pt}
\begin{proposition}[Unconditional guesswork exponent] \label{prop: exponent}
	Let $h \in (0,\log |\calX|)$.     Then, the exponents 
	$e_X^{\gtrless}(h) \coloneqq \lim_{n\to\infty} 
	-\tfrac{1}{n}\log\Pr\{G(\Xvec) \gtrless 2^{nh}\}$ are respectively 
	given by
	\begin{equation*} 
		e_X^{\gtrless}(h) =  D(\tiltP{P}{\lambda_{h}} \,\|\, P) \cdot 
		\indic\{h \gtrless H(P)\},
	\end{equation*}
	where $\gtrless$ is either $>$ or $<$ and $\lambda_h$ is such that 
	$H(\tiltP{P}{\lambda_{h}}) = h$ in both cases.
\end{proposition}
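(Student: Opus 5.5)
The plan is to reduce the event $\{G(\Xvec)<2^{nh}\}$, up to polynomial factors, to a condition on the type $Q=Q_{\Xvec}$ alone, and then solve the resulting divergence minimization.

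\textbf{Step 1: reduce to a condition on the type.} For a type $Q$, the guesses made before reaching the class $\typical{n}{Q}$ are exactly the sequences in types $Q'\better Q$ (up to ties). So every $\xvec\in\typical{n}{Q}$ has guesswork between $N(Q):=\sum_{Q'\better Q}|\typical{n}{Q'}|$ and $N(Q)+|\typical{n}{Q}|$. Since there are polynomially many types, this range is $\doteq \max_{Q'\bettereq Q,\,Q'\in\calP_n}\exp(nH(Q'))$. Define
\[
\phi(Q)=\max\{H(Q'): Q'\in\calP(\calX),\ H(Q'\|P)\le H(Q\|P)\}.
\]
By continuity and density of $n$-types, $G(\xvec)\doteq\exp(n\phi(Q))$ up to $o(n)$ in the exponent, uniformly in $Q$. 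Summing $\Pr\{\Xvec\in\typical{n}{Q}\}\doteq\exp(-nD(Q\|P))$ over types then gives
\[
e_X^{<}(h)=\inf\{D(Q\|P):\phi(Q)<h\},\qquad e_X^{>}(h)=\inf\{D(Q\|P):\phi(Q)>h\}.
\]
Matching upper and lower limits requires continuity of these infima in $h$, which should follow from the explicit solution below.

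\textbf{Step 2: compute $\phi$.} Maximizing $H(Q')$ subject to a linear constraint $\sum_x Q'(x)\log(1/P(x))\le c$ has, when the constraint binds, a Lagrangian (exponential-family) solution of the form $Q'\propto P^{\lambda}$, i.e.\ the tilted distribution $\tiltP{P}{\lambda}$. The reason is that $D(Q'\|\tiltP{P}{\lambda})=-H(Q')+\lambda H(Q'\|P)+\log\sum_x P(x)^\lambda\ge0$. Define $\lambda\ge 0$ by $H(\tiltP{P}{\lambda}\|P)=c$; $\lambda=0$ gives the uniform distribution, and $\lambda\to\infty$ gives the point mass at the unique mode of $P$. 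Then $\phi(Q)=H(\tiltP{P}{\lambda})$. Two monotonicity facts are needed:
\begin{itemize}
\item $\lambda\mapsto H(\tiltP{P}{\lambda})$ and $\lambda\mapsto H(\tiltP{P}{\lambda}\|P)$ are both decreasing. Their derivatives are variances of $\log P$ under the tilt, up to sign and a factor of $\lambda$.
\item $\lambda_h$ exists and is unique for $h\in(0,\log|\calX|)$.
\end{itemize}
Consequently $\phi(Q)<h$ iff $H(Q\|P)<H(\tiltP{P}{\lambda_h}\|P)=:c_h$.

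\textbf{Step 3: solve the relaxed problem.} It remains to minimize $D(Q\|P)$ over the half-space $H(Q\|P)<c_h$, and likewise over $H(Q\|P)>c_h$. At $Q=P$ we have $H(P\|P)=H(P)$, and $H(P)<c_h$ iff $\lambda_h<1$ iff $h>H(P)$. So if $h>H(P)$, the point $P$ is feasible for the ``$<$'' problem and the exponent is $0$, and symmetrically for the other case. Otherwise the constraint binds, and minimizing $D(Q\|P)$ subject to $\sum_x Q(x)\log(1/P(x))=c_h$ gives $Q\propto P\cdot P^{\mu}=P^{\lambda}$. The constraint forces $\lambda=\lambda_h$, so the minimum is $D(\tiltP{P}{\lambda_h}\|P)$. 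This is again verified by a Pythagorean identity: for $Q$ on the hyperplane, $D(Q\|P)=D(Q\|\tiltP{P}{\lambda_h})+D(\tiltP{P}{\lambda_h}\|P)$. The inequality constraints are handled by convexity of $D(\cdot\|P)$ along segments toward $P$.

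\textbf{Main obstacle.} The hardest part is Step 1: justifying rigorously that $G(\xvec)$ is governed by $\phi(Q)$ with errors that are uniform in $Q$, including the strict-versus-nonstrict and tie issues. The specific difficulty is to show that the maximum of $H$ over $n$-types $Q'\bettereq Q$ converges to the continuous $\phi(Q)$. I would handle this by perturbing $\tiltP{P}{\lambda}$ slightly toward a better type and rounding to an $n$-type with $O(1/n)$ error. I would then use continuity of $h\mapsto D(\tiltP{P}{\lambda_h}\|P)$ to sandwich the limit of the probabilities.
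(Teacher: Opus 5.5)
Your proposal is correct, and it solves the optimization differently from the paper. The paper never evaluates $\max_{Q'\bettereq Q}H(Q')$ for general $Q$. Instead it proceeds in three steps:
\begin{itemize}
\item it relaxes the constraint to the entropy constraint $H(Q)\gtreqless h$;
\item it solves the relaxed problem by pushing the minimizer onto the level set $\{H(Q)=h\}$, where it uses the identity $D(Q\|\tiltP{P}{\lambda_h})=\lambda_h[D(Q\|P)-D(\tiltP{P}{\lambda_h}\|P)]$;
\item it shows the relaxation is tight with two direction-specific arguments: an exchange with the entropy maximizer $\tilde Q$ for ``$>$'', and a max-entropy check that $\tiltP{P}{\lambda_h}$ itself is feasible for ``$<$''.
\end{itemize}
You instead solve the inner max-entropy problem in closed form. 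You note that $\phi(Q)$ depends on $Q$ only through the linear functional $H(Q\|P)$ and is strictly increasing in it, so the feasible set is exactly the half-space $\{H(Q\|P)\lessgtr c_h\}$. The outer problem is then an I-projection onto a linear family, settled for both tails at once by the Pythagorean identity, with no relaxation-and-tightness step.

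What your route buys is transparency. It shows that $\{G(\Xvec)<2^{nh}\}$ is, to first order in the exponent, the threshold event $\{-\tfrac1n\log P(\Xvec)<c_h\}$. The proposition is then Cram\'er's theorem for the i.i.d.\ variables $-\log P(X_i)$, with the threshold reparametrized by $h\mapsto c_h$.

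What the paper's route buys is twofold. It uses only the entropy side of the tilted family; you additionally need monotonicity of $\lambda\mapsto H(\tiltP{P}{\lambda}\|P)$, though that is a one-line variance computation. More importantly, it is stated in the form the paper reuses for the conditional theorems. There the order on joint types fixes the $Y$-marginal, so your threshold would become a nonlinear function of $Q_Y$ and the feasible set would no longer be a single half-space. Your argument still works for fixed $Q_Y$, as in Theorem~\ref{thm: single_cond_exponent}.

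Finally, your Step 1 is more careful than the paper's, which simply asserts that the maximum over strictly better types ``becomes a supremum in the limit.'' Your perturb-toward-the-mode-and-round argument, together with $G\ge 1$ for types with $\phi(Q)$ near $0$, is what makes that claim uniform in $Q$.
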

\begin{proof}
	Note that the number of guesses $G(\xvec)$ can be upper and lower 
	bounded as
	\begin{equation*}
		\ell(Q_{\xvec}) \coloneqq \sum_{Q' \better Q_{\xvec}} 
		|\typical{n}{Q'}| \leq G(\xvec) \leq \sum_{Q' \bettereq Q_{\xvec}} 
		|\typical{n}{Q'}| \eqqcolon u(Q_{\xvec}).
	\end{equation*}
	Then, up to polynomial factors in $n$, we have, for $Q \in 
	\calP_n(\calX)$,  $\ell(Q) \doteq \exp[n\, \max_{Q' \better Q} H(Q')]$ 
	and $u(Q) \doteq \exp[n\, \max_{Q' \bettereq Q} H(Q')]$.  Further, the 
	probability that $\Xvec$ has type $Q$ is $\Pr\{\Xvec \in 
	\typical{n}{Q}\} \doteq \exp[-n\, D(Q \| P)]$.  Since we are interested 
	in the regime when $n \to \infty$, note that for $\ell(Q)$, the maximum 
	over the $Q'$'s satisfying the strict inequality becomes a supremum in 
	the limit, with the maximum attained in the closure, i.e., allowing $Q' 
	\bettereq Q$, and hence the exponents of both $\ell(Q)$ and $u(Q)$ are 
	$\max_{Q' \bettereq Q} H(Q')$.
	
	\sloppy
	Putting these together, we have that the exponent $e_X^{>}(h)$ is 
	determined by the type $Q$ that has the smallest $D(Q \| P)$ while 
	satisfying $\max_{Q' \bettereq Q} H(Q') \geq h$ (and similarly for 
	$e_X^{<}(h)$ and $\max_{Q' \bettereq Q} H(Q') \leq h$), i.e., 
	\begin{equation*}
		e_X^{\gtrless}(h) = \min_{Q : \max_{Q' \bettereq Q} H(Q') 
			\gtreqless h} D(Q \,\|\, P),
	\end{equation*}
	where $Q, Q' \in \calP(\calX)$ and $\gtreqless$ is $\geq$ or $\leq$ 
	accordingly when $\gtrless$ is $>$ or $<$.  In both cases, we show that 
	the above optimization problem is respectively equivalent to the 
	simpler problem
	\begin{equation*}
		E^{\gtrless}(h) \coloneqq \min_{Q : H(Q) \gtreqless h} D(Q \,\|\, 
		P).
	\end{equation*}
	
	We will show that $e_X^{\gtrless}(h) = E^{\gtrless}(h)$, but we
	first show that $E^{\gtrless}(h)$ is equal to $D(\tiltP{P}{\lambda_{h}} 
	\| P) \cdot \indic\{h \gtrless H(P)\}$, as desired.
	Clearly, if $H(P) \leq h$, then $E^{<}(h) = 0$, and if $H(P) \geq h$, 
	then $E^{>}(h) = 0$, as $Q = P$ is feasible.  
	
	We first consider $E^{>}(h) = \min_{Q : H(Q) \geq h} D(Q \| P)$ and 
	suppose 
	$h > H(P)$.  Then, we claim that the solution lies on the boundary $\{Q 
	: 
	H(Q) = h\}$.  To see why this is true, suppose the minimizer $Q^*$ has 
	$H(Q^*) > h$, and consider $Q_t = (1-t)Q^* + t P$ for $t \in (0,1)$.  
	Then, 
	by the convexity of divergence, $D(Q_t \| P) \leq (1-t) D(Q^* \| P) < 
	D(Q^* 
	\| P)$. Meanwhile, for sufficiently small $t > 0$, we still have 
	$H(Q_t) > 
	h$ as the entropy is continuous on finite sets, which contradicts the 
	optimality of $Q^*$.  Hence, $Q^*$ must satisfy $H(Q^*) = h$.  
	A completely identical argument also shows that the minimizer in 
	$E^{<}(h)$  for $h < H(P)$ also lies on the boundary  $\{Q : H(Q) = 
	h\}$. 
	
	Hence, it is enough to solve $\min_{Q : H(Q) = h} D(Q \| P)$.  We claim 
	that the minimizer is $Q = \tiltP{P}{\lambda_h}$, where $\lambda_h$ is 
	such that $H(\tiltP{P}{\lambda_h}) = h$.  Clearly, 
	$\tiltP{P}{\lambda_h}$ satisfies the constraint.  It is enough to show 
	that $D(Q \| P) \geq D(\tiltP{P}{\lambda_h} \| P)$ for all $Q$ with 
	$H(Q) = h$.  Consider the difference $D(Q \| P) - 
	D(\tiltP{P}{\lambda_h} \| P) = H(Q \| P) - H(\tiltP{P}{\lambda_h} \| P)$
	\begin{equation*}
		= \sum_x  \left(Q(x) - \tiltP{P}{\lambda_h}(x)\right) \log 
		\frac{1}{P(x)}.
	\end{equation*}
	Also consider the divergence $D(Q \| \tiltP{P}{\lambda_h}) = H(Q \| 
	\tiltP{P}{\lambda_h}) - H(Q)$, and since $H(\tiltP{P}{\lambda_h}) = 
	H(Q)$, we have that $D(Q \| \tiltP{P}{\lambda_h})$
	\begin{align*}
		&= H(Q \| \tiltP{P}{\lambda_h}) - H(\tiltP{P}{\lambda_h})\\
		&= \sum_x  \left(Q(x) - \tiltP{P}{\lambda_h}(x)\right) \log 
		\frac{1}{\tiltP{P}{\lambda_h}(x)} \\
		&= \lambda_h [D(Q\| P) - D(\tiltP{P}{\lambda_h} \| P)],
	\end{align*}
	from the above.  Hence, $D(Q\| P) - D(\tiltP{P}{\lambda_h} \| P) \geq 
	0$, and we have that $\tiltP{P}{\lambda_h}$ is the desired minimizer. 
	(We could have also used Lagrange multipliers to arrive at the same 
	minimizer.)  To complete the proof, it remains to show that 
	$e_X^{\gtrless}(h) = E^{\gtrless}(h)$.
	
	First consider $e_X^{>}(h) = \min_{Q : \max_{Q' \bettereq Q} H(Q') \geq 
		h} D(Q \| P)$.   Suppose, for a fixed $Q$ with $\max_{Q' \bettereq Q} 
	H(Q') \geq h$, that $\tilde Q$ (which depends on $Q$) is a maximizer of 
	$\max_{Q' \bettereq Q} H(Q')$.  Then we have $H(\tilde Q) \geq H(Q)$ 
	and $H(\tilde Q) \geq h$, as $\tilde Q$ is the maximizer, and $H(\tilde 
	Q) + D(\tilde Q \| P) \leq H(Q) + D(Q \| P)$, as $\tilde Q \bettereq 
	Q$.  Together, this implies that $D(\tilde Q \| P) \leq D(Q \| P)$.  
	
	Hence, for every $Q$ that is feasible in the original problem 
	(satisfying $\max_{Q' \bettereq Q} H(Q') \geq h$), we have a $\tilde Q$ 
	that satisfies $D(\tilde Q \| P) \leq D(Q \| P)$ and $H(\tilde Q) \geq 
	h$, i.e., 
	\begin{equation*}
		e_X^{>}(h) \geq \min_{Q : H(Q) \geq h} D(Q \| P) = E^{>}(h).
	\end{equation*}
	On the other hand, note that $Q$ with $H(Q) \geq h$ also necessarily 
	satisfies $\max_{Q' \bettereq Q} H(Q') \geq h$, and hence, the reverse 
	inequality, and consequently equality, holds.

	Now consider $e_X^{<}(h) = \min_{Q : \max_{Q' \bettereq Q} H(Q') \leq 
		h} D(Q \| P)$.  
	Note that we have $E^{<}(h) \leq e_X^{<}(h)$ in general, as $Q$ with 
	$\max_{Q' \bettereq Q} H(Q') \leq h$ also satisfies $H(Q) \leq h$.   
	Hence, if we can show that the minimizer $Q^*$ of the relaxed problem 
	$\min_{Q : H(Q) \leq h} D(Q \| P)$ satisfies the original constraint 
	$\max_{Q' \bettereq Q^*} H(Q') \leq h$, then we are done. 
	
	If $h \geq H(P)$, then $Q^* = P$, and for any $Q' \bettereq Q^*$,
	\begin{equation*}
		H(Q') \leq H(Q') + D(Q' \| P) \leq H(Q^*) + D(Q^* \| P),
	\end{equation*}
	which is equal to $H(P) \leq h$, and hence, $\max_{Q' \bettereq Q^*} 
	H(Q') \leq h$.
	On the other hand, for $h < H(P)$, then $Q^* = \tiltP{P}{\lambda_h}$ 
	with $H(Q^*)=h$.
	Note that $\max_{Q' \bettereq \tiltP{P}{\lambda}} H(Q')$ for $\lambda > 
	0$ 
	involves maximizing a concave function subject to a linear constraint. 
	Solving this via Lagrange multipliers yields the maximizer to be 
	exactly 
	$\tiltP{P}{\lambda}$, and hence, $\max_{Q' \bettereq Q^*} H(Q') = 
	H(\tiltP{P}{\lambda_h}) = h$, and we are done.
\end{proof}

\begin{remark}
	One way to get an easy upper bound to $\Pr\{G(\Xvec) > 2^{nh}\}$ is to 
	use Arikan's bound~\cite{arikan} on the moments of $G(\Xvec)$, i.e., 
	$\bbE[G(\Xvec)^{\rho}] \leq \left(\sum_x 
	P(x)^{\frac{1}{1+\rho}}\right)^{n(1+\rho)}$ for $\rho > 0$, together 
	with Markov's inequality
	\begin{align*}
		\Pr\{G(\Xvec) > 2^{nh}\} \leq 2^{-nh\rho} \bbE[G(\Xvec)^{\rho}].
	\end{align*}
	Optimizing over the choice of $\rho$ gives the best bound at $\rho = 
	1/\lambda_h - 1$, with $\lambda_h \in (0,1)$ as above (which is 
	possible if $h > H(P)$), and the exponent is exactly 
	$D(\tiltP{P}{\lambda_h} \| P)$.
\end{remark}

One might naively expect a conditional version of the above statement to 
follow immediately by simply replacing $P$ by $P_{X|Y}$, but it is easy to 
see that this is not the case, as the $X_i$ are not identically distributed 
given $\Yvec = \yvec$.  Indeed, $X_i$ has distribution $P_{X|Y}(\cdot | 
y_i)$ given $\Yvec = \yvec$.  Nonetheless, by repeating the type-counting 
argument above with correct order on \emph{conditional} types, we can 
compute the tail exponents of conditional guesswork.  We compute two such 
conditional exponents, namely that of $\Pr\{G(\Xvec|\Yvec)\gtrless 
2^{nh}\mid \Yvec=\yvec\}$ for $\yvec$'s whose type converges to a limit in 
Theorem~\ref{thm: single_cond_exponent}, and $\Pr\{G(\Xvec|\Yvec)\gtrless 
2^{nh}\}$ averaged over $\Yvec$ in Theorem~\ref{thm: cond_exponent}. 

\sloppy
\vspace*{3pt}
\begin{theorem}[Conditional guesswork exponent for given $\yvec$-type] 
	\label{thm: single_cond_exponent}
	Let $h \in \left(0, \sum_y Q(y)\log |\supp(P_{X|Y}(\cdot|y))|\right)$ 
	and $Q \in \calP(\calY)$. Then, for $\yvec^{(n)}$ such that the type 
	$Q_{\yvec}^{(n)} \to Q$, the exponents  $e_{X|Y}^{\gtrless}(h,Q) 
	\coloneqq \lim_{n\to\infty} -\tfrac1n \log \Pr\{G(\Xvec|\Yvec) \gtrless 
	2^{nh} \mid \Yvec = \yvec^{(n)}\}$ are respectively given by
	\begin{equation*}
		e_{X|Y}^{\gtrless}(h,Q)  =  D(\tiltP{P_{X|Y}}{\lambda_{h,Q}} \| 
		P_{X|Y} | Q) \cdot \indic\{h \gtrless H_{QP_{X|Y}}(X | Y)\},
	\end{equation*}
	where $\gtrless$ is either $>$ or $<$ and $\lambda_{h,Q}$ is such that 
	$H_{Q\tiltP{P_{X|Y}}{\lambda_{h,Q}}}(X | Y) = h$ in both cases. 
\end{theorem}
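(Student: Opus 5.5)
We will mirror the proof of Proposition~\ref{prop: exponent}, replacing types by conditional types given $\yvec = \yvec^{(n)}$ and using the order $\bettereq$ on conditional types from Section~\ref{sec: guess_order}, which depends on $P_{X|Y}$ and $Q_{\yvec}$.

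\emph{Step 1: reduce to an optimization over conditional types.} Write $Q_n = Q_{\yvec^{(n)}}$. Given $\yvec$, the guesswork is sandwiched as
\begin{equation*}
\sum_{V' \better Q_{\xvec|\yvec}} |\typical{n}{V'}(\yvec)| \leq G(\xvec|\yvec) \leq \sum_{V' \bettereq Q_{\xvec|\yvec}} |\typical{n}{V'}(\yvec)|.
\end{equation*}
The number of conditional types is polynomial, and $|\typical{n}{V'}(\yvec)| \doteq \exp(n H_{Q_nV'}(X|Y))$. Hence both bounds are $\doteq \exp[n \max_{V' \bettereq V} H_{Q_nV'}(X|Y)]$.

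For the event itself, $\Pr\{\Xvec \in \typical{n}{V}(\yvec) \mid \Yvec=\yvec\} \doteq \exp[-n D(V\|P_{X|Y}|Q_n)]$. Here $V(\cdot|y)$ must be supported in $\supp P_{X|Y}(\cdot|y)$; otherwise the probability is zero. Taking $n\to\infty$ with $Q_n \to Q$, we use continuity of the objective and constraint, and density of conditional types in the set of conditional distributions, to pass from maxima over types to maxima over all $V$ (the same closure argument as in the proposition). This gives
\begin{equation*}
e^{\gtrless}_{X|Y}(h,Q) = \min_{V:\ \max_{V' \bettereq V} H_{QV'}(X|Y) \gtreqless h} D(V\|P_{X|Y}|Q).
\end{equation*}

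\emph{Main obstacle.} This limit interchange is where we expect the most care to be needed, because the order itself depends on $Q_n$. We would handle it by noting that the map $(Q_n, V) \mapsto H(V\|P_{X|Y}|Q_n)$ is linear in the joint distribution and jointly continuous. The feasible sets therefore converge in the appropriate sense, provided the constraint is not degenerate. Non-degeneracy is what the restriction of $h$ to the open interval $(0, \sum_y Q(y)\log|\supp P_{X|Y}(\cdot|y)|)$ provides: it ensures that Slater-type interior points exist for the constraint, so the boundary value is approached from feasible types.

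\emph{Step 2: solve the relaxed problem.} Set $E^{\gtrless}(h) = \min_{V: H_{QV}(X|Y)\gtreqless h} D(V\|P_{X|Y}|Q)$. If $h \lessgtr H_{QP_{X|Y}}(X|Y)$ on the appropriate side, then $V=P_{X|Y}$ is feasible and the value is $0$. Otherwise we argue as in the proposition: mixing the minimizer $V^*$ toward $P_{X|Y}$ strictly decreases the divergence (conditional divergence is convex in $V$ for fixed $Q$) while preserving the constraint for small $t$. So the minimizer lies on $\{H_{QV}(X|Y)=h\}$.

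On this boundary, define $\lambda_{h,Q}$ by the defining property. It exists and is unique by continuity and monotonicity of $\lambda \mapsto H_{Q\tiltP{P_{X|Y}}{\lambda}}(X|Y)$: this map tends to $\sum_y Q(y)\log|\supp P_{X|Y}(\cdot|y)|$ as $\lambda\to0$, and to $0$ as $\lambda\to\infty$ by the unique-maximizer assumption. Write $\tilde V = \tiltP{P_{X|Y}}{\lambda_{h,Q}}$. We then use the identity
\begin{equation*}
D(V\|\tilde V|Q) = \lambda_{h,Q}\,[D(V\|P_{X|Y}|Q) - D(\tilde V\|P_{X|Y}|Q)].
\end{equation*}
It follows because $\log(1/\tilde V(x|y)) = \lambda\log(1/P_{X|Y}(x|y)) + \log Z(y)$, and the $\log Z(y)$ terms cancel since both $V$ and $\tilde V$ are weighted by the same $Q(y)$. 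This identity shows that $\tilde V$ is optimal.

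\emph{Step 3: show $e^{\gtrless} = E^{\gtrless}$.} For $>$, let $\tilde V$ be a maximizer of $H_{QV'}(X|Y)$ over $V' \bettereq V$. The order gives $H_{Q\tilde V} + D(\tilde V\|P_{X|Y}|Q) \le H_{QV} + D(V\|P_{X|Y}|Q)$, and maximality gives $H_{Q\tilde V} \ge H_{QV}$. Together these yield $D(\tilde V\|\cdot) \le D(V\|\cdot)$ with $H_{Q\tilde V} \ge h$, so $e^> \ge E^>$. The reverse inequality is trivial.

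For $<$, the inequality $E^< \le e^<$ is trivial. We then check that the relaxed minimizer is feasible for the original problem. If the minimizer is $P_{X|Y}$, feasibility follows from the same chain as in the proposition. If the minimizer is $\tilde V$, we need $\max_{V' \bettereq \tilde V} H_{QV'}(X|Y) = h$. Maximizing a concave function under the single linear constraint $\sum Q(y)V'(x|y)\log(1/P_{X|Y}) \le \text{const}$ gives, via Lagrange/KKT, a solution of the form $V'(x|y) \propto P_{X|Y}(x|y)^{\mu}$ with a common $\mu$ across $y$. The constraint is tight at $\mu=\lambda_{h,Q}$, so the maximizer is $\tilde V$ itself.
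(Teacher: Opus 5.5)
Your proposal is correct and follows the same route as the paper's proof. You sandwich $G(\xvec|\yvec)$ between sums over conditional type classes, reduce to $\min D(V\|P_{X|Y}|Q)$ under $\max_{V'\bettereq V} H_{QV'}(X|Y) \gtreqless h$, relax this to $H_{QV}(X|Y)\gtreqless h$, and identify the conditional tilted distribution with a common $\lambda_{h,Q}$ across $y$ as the minimizer. You also fill in details the paper's sketch leaves implicit, namely the support restriction, the identity $D(V\|\tilde V|Q)=\lambda_{h,Q}[D(V\|P_{X|Y}|Q)-D(\tilde V\|P_{X|Y}|Q)]$, and the role of the open interval for $h$ in the limit interchange, and these check out.
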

\begin{proof}[Proof sketch]
	The proof follows identically to that of Proposition~\ref{prop: 
		exponent}, except that we replace the initial bounds on 
	$G(\xvec|\yvec)$ by terms involving conditional distributions and 
	conditional type classes, i.e., for $\xvec \in 
	\typical{n}{Q_{X|Y}}(\yvec)$, we have
	\begin{equation*}
		\ell(Q_{X|Y},\yvec)  \leq G(\xvec|\yvec) \leq u(Q_{X|Y}, \yvec),
	\end{equation*}
	where $\ell(V, \yvec) \coloneqq \sum_{V' \better V} 
	|\typical{n}{V'}(\yvec)|$ and $u(V, \yvec)$ is identical with $\better$ 
	replaced by $\bettereq$, with both $V,V'$ being conditional types.   As 
	before, we have $\ell(V,\yvec) \doteq \exp[n\, \max_{V' \better V} 
	H_{Q_{\yvec}V'}(X | Y)]$, and $u(V,\yvec)$ is identical with 
	$\bettereq$ instead of $\better$. 
	
	Together with the probability of $\Xvec$ having conditional type $V$ 
	given $\Yvec = \yvec$ being $\Pr\{\Xvec \in \typical{n}{V}(\yvec) \mid 
	\Yvec = \yvec\} \doteq \exp[-n\,D(V \| P_{X|Y} | Q_{\yvec})]$, we have 
	that the exponent is
	\begin{equation*}
		e_{X|Y}^{\gtrless}(h,Q) = \min_{V : \max_{V' \bettereq V} H_{QV'}(X 
			| Y) \gtreqless h} D(V \,\|\, P_{X|Y} \mid Q)
	\end{equation*}
	with $V,V'$ going over all conditional types on $\calX$ from $\calY$.   
	By a completely analogous reasoning to the unconditional case, we claim 
	that the minimization problem is unchanged if we relax the constraint 
	on the conditional type $V$ to simply $H_{Q V}(X | Y) \gtreqless h$. 
	The solution to this simpler problem occurs at $V = P_{X|Y}$ for 
	$e^{>}_{X|Y}(h,Q)$ if $ H_{QP_{X|Y}}(X|Y) \geq h$ and for 
	$e^{<}_{X|Y}(h,Q)$ if $ H_{QP_{X|Y}}(X|Y) \leq h$.  On the other hand, 
	the solution is  at $V = \tiltP{P_{X|Y}}{\lambda_{h,Q}}$ for 
	$e^{>}_{X|Y}(h,Q)$ if $ H_{QP_{X|Y}}(X|Y) < h$ and for 
	$e^{<}_{X|Y}(h,Q)$ if $ H_{QP_{X|Y}}(X|Y) > h$.  Note that the tilted 
	distribution here is the conditional tilted distribution, which is 
	obtained by tilting $P_{X|Y}(\cdot|y)$ with parameter $\lambda_{h,Q}$ 
	for each $y$.
\end{proof}

\begin{theorem}[Conditional guesswork exponent] \label{thm: cond_exponent}
	Let $h \in \left(0, \max_y \log |\supp(P_{X|Y}(\cdot|y))|\right)$.
	For $\lambda > 0$, define
	\begin{equation}
		Q_{Y,\lambda}(y) \coloneqq \frac{P_Y(y)\, \left(\sum_x 
			P_{X|Y}(x|y)^{\lambda}\right)^{1/\lambda}}{\sum_{y'} P_Y(y')\, 
			\left(\sum_{x'} P_{X|Y}(x'|y')^{\lambda}\right)^{1/\lambda}}. 
		\label{eqn: q_cond_exp}
	\end{equation}
	Then, the exponents of $\Pr\{G(\Xvec|\Yvec) \gtrless 2^{nh}\}$ as 
	defined in \eqref{eqn: cond_exponents} are respectively given by
	\begin{equation*}
		e_{X|Y}^{\gtrless}(h)
		=
		D(Q_{Y,\lambda_h}\tiltP{P_{X|Y}}{\lambda_h} \,\|\, P_{XY})
		\cdot \indic\{h \gtrless H(X \mid Y)\},
	\end{equation*}
	where $\gtrless$ is either $>$ or $<$ and $\lambda_h$ is such that 
	$H_{Q_{Y,\lambda_h}\tiltP{P_{X|Y}}{\lambda_h}}(X \mid Y) = h$ in both 
	cases.
\end{theorem}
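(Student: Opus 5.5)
We start by reducing to the fixed-$\yvec$-type result. Split the probability according to the type of $\Yvec$:
\[
\Pr\{G(\Xvec|\Yvec)\gtrless 2^{nh}\}=\sum_{Q_Y\in\calP_n(\calY)}\Pr\{\Yvec\in\typical{n}{Q_Y}\}\,\Pr\{G(\Xvec|\Yvec)\gtrless 2^{nh}\mid \Yvec=\yvec\},
\]
where $\yvec$ is any representative of $\typical{n}{Q_Y}$; the conditional probability depends on $\yvec$ only through its type. There are polynomially many types, and $\Pr\{\Yvec\in\typical{n}{Q_Y}\}\doteq\exp[-nD(Q_Y\|P_Y)]$. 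Combining this with the type-counting argument in the proof of Theorem~\ref{thm: single_cond_exponent}, we get
\[
e^{\gtrless}_{X|Y}(h)=\min_{Q_Y}\Big[D(Q_Y\|P_Y)+e^{\gtrless}_{X|Y}(h,Q_Y)\Big]
=\min_{Q_Y,V:\ H_{Q_YV}(X|Y)\gtreqless h} D(Q_YV\,\|\,P_{XY}).
\]
The second equality uses Theorem~\ref{thm: single_cond_exponent} (in its relaxed form) together with the chain rule $D(Q_YV\|P_{XY})=D(Q_Y\|P_Y)+D(V\|P_{X|Y}|Q_Y)$.

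The first equality needs some care. Theorem~\ref{thm: single_cond_exponent} is an asymptotic statement for a convergent sequence of types, whereas here we minimize over $Q_Y$. We will therefore use its type-level form: for each $n$, the conditional probability is $\doteq \exp[-n\min D(V\|P_{X|Y}|Q_Y)]$ over admissible $V$, with polynomial factors uniform in $Q_Y$. Then we pass to the limit using continuity of the objective and density of types. The step is harmless only if the constraint set is the closure of its interior, and this holds for $h$ in the stated range.

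The reduction is now a joint minimization over $Q_{XY}$:
\[
\min D(Q_{XY}\|P_{XY})\quad\text{subject to}\quad H_Q(X|Y)\gtreqless h.
\]
If $H(X|Y)\gtreqless h$, then $Q=P_{XY}$ is feasible and the exponent is $0$. Otherwise we argue as in Proposition~\ref{prop: exponent}. Mixing the minimizer toward $P_{XY}$ and using convexity of divergence and continuity of $H(X|Y)$, we show the minimizer lies on $\{H_Q(X|Y)=h\}$.

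We then guess the minimizer $Q^*=Q_{Y,\lambda}\tiltP{P_{X|Y}}{\lambda}$ from Lagrange multipliers. The Lagrangian is $D(Q\|P)-\mu H_Q(X|Y)$. Its stationarity condition gives $Q(x,y)\propto P(x,y)^{1/(1+\mu)}Q(y)^{\mu/(1+\mu)}$; set $\lambda=1/(1+\mu)$. Normalizing over $x$ gives $Q(x|y)=\tiltP{P_{X|Y}}{\lambda}(x|y)$, and solving the fixed point for $Q(y)$ gives $Q(y)\propto P_Y(y)\big(\sum_x P_{X|Y}(x|y)^\lambda\big)^{1/\lambda}$, which is exactly \eqref{eqn: q_cond_exp}.

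To verify optimality cleanly, mirroring the proposition, we compute for any $Q$ with $H_Q(X|Y)=h$. Writing $\log Q^*(x,y)=\lambda\log P(x,y)+(1-\lambda)\log Q^*(y)+c$, we expect
\[
D(Q\|Q^*)=\lambda\big[D(Q\|P)-D(Q^*\|P)\big]-(1-\lambda)\big[D(Q_Y\|Q^*_Y)-\ldots\big].
\]
This produces the inequality $\lambda\,[D(Q\|P)-D(Q^*\|P)]\ge \lambda D(Q\|Q^*)+(1-\lambda)D(Q_Y\|Q^*_Y)$ or a similar nonnegative combination. Nonnegativity then gives optimality, provided the sign works out; we must check it in both regimes, $\lambda<1$ and $\lambda>1$. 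As a fallback, the problem is a convex program: the divergence is convex and $H_Q(X|Y)$ is concave in $Q_{XY}$. For the $>$ case, stationarity plus convexity gives global optimality directly. Finally, existence and uniqueness of $\lambda_h$ follow from monotonicity of $\lambda\mapsto H_{Q^*_\lambda}(X|Y)$ and its limits as $\lambda\to0$ and $\lambda\to\infty$, which correspond to the endpoints of the stated range of $h$. The unique-maximizer assumption is used for the $\lambda\to\infty$ end.

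The main obstacle is the $<$ case. There the relaxed problem minimizes over a nonconvex set, $\{H\le h\}$ with $H$ concave, so the Lagrangian argument gives only a stationary point. The first thing to try is the identity above, extracting a nonnegative combination of divergences. If that fails, we would instead minimize over $Q_Y$ the already known inner value from Theorem~\ref{thm: single_cond_exponent}: $D(Q_Y\|P_Y)+D(\tiltP{P_{X|Y}}{\lambda_{h,Q_Y}}\|P_{X|Y}|Q_Y)$. We would then show that its stationary point has $\lambda_{h,Q_Y}$ constant, equal to $\lambda_h$, with $Q_Y=Q_{Y,\lambda_h}$.
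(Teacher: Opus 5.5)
Your strategy is the paper's. It also reduces the problem to $\min\{D(Q\|P_{XY}) : H_Q(X|Y)\gtreqless h\}$, pushes the minimizer to the boundary, and identifies it by Lagrange multipliers. Your averaging of Theorem~\ref{thm: single_cond_exponent} over $\yvec$-types is exactly the alternative route the paper gives in its remark after the theorem; its main sketch instead works directly with joint types under the partial order $\bettereq$.

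Your stationarity computation recovering \eqref{eqn: q_cond_exp} is correct. The paper stops there and simply asserts that the stationary point is the minimizer. Where you go further, the identity you display is wrong, and the $<$ case is not actually an obstacle.

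Write $\log Q^*(x,y)=\lambda\log P_{XY}(x,y)+(1-\lambda)\log Q^*_Y(y)+c$ and let $h=H_{Q^*}(X|Y)$. A direct computation gives, for every $Q$,
\begin{equation*}
\lambda\bigl[D(Q\|P_{XY})-D(Q^*\|P_{XY})\bigr]=D(Q\|Q^*)-(1-\lambda)D(Q_Y\|Q^*_Y)+(1-\lambda)\bigl[H_Q(X|Y)-h\bigr].
\end{equation*}
The $D(Q_Y\|Q^*_Y)$ term has the opposite sign from your display, and the inequality you read off is false in general. The correct identity still closes the argument at once. By the chain rule the right-hand side equals
\begin{equation*}
\lambda D(Q_Y\|Q^*_Y)+D(Q_{X|Y}\|Q^*_{X|Y}\,|\,Q_Y)+(1-\lambda)\bigl[H_Q(X|Y)-h\bigr].
\end{equation*}
The map $\lambda\mapsto H_{Q_{Y,\lambda}\tiltP{P_{X|Y}}{\lambda}}(X|Y)$ is decreasing and equals $H(X|Y)$ at $\lambda=1$. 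Hence $\lambda_h<1$ in the $>$ case and $\lambda_h>1$ in the $<$ case. In both cases the last term is nonnegative on the feasible set, so $Q^*$ is optimal, and the boundary argument is not even needed.

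The Lagrangian view says the same thing. In the $<$ case the multiplier is $\mu'=1-1/\lambda_h\in(0,1)$, and
\begin{equation*}
D(Q\|P_{XY})+\mu' H_Q(X|Y)=-(1-\mu')H(Q)-\mu' H(Q_Y)+\sum_{x,y}Q(x,y)\log\frac{1}{P_{XY}(x,y)}
\end{equation*}
is convex in $Q$. So the stationary point globally minimizes the Lagrangian, and weak duality handles the nonconvex feasible set $\{H_Q(X|Y)\le h\}$. Your fallback through Theorem~\ref{thm: single_cond_exponent} is therefore unnecessary.
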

\begin{remark}
	Before proceeding to the proof, it is worth noting that if $P_{X|Y}$ is 
	symmetric in the sense that $\sum_x P_{X|Y}(x|y)^{\lambda_h}$ is a 
	constant for all $y$, then $Q_{Y,\lambda_h} = P_Y$, and the exponents 
	are $D(P_Y\tiltP{P_{X|Y}}{\lambda_h} \| P_{XY}) = 
	D(\tiltP{P_{X|Y}}{\lambda_h} \| P_{X|Y} | P_Y )$ which is equal to the 
	conditional exponent $e^{\gtrless}_{X|Y}(h, P_Y)$.   
\end{remark}

\begin{proof}[Proof sketch]
	We follow the same procedure as in the proof of Proposition~\ref{prop: 
		exponent}, except that we go over joint types $Q_{\xvec\yvec}$ to 
	compute the probability instead of types on $\calX$ only.  Recall that 
	the (partial) order on joint types is defined so that $Q_{XY} \bettereq 
	Q'_{XY}$ means that $Q_Y = Q'_Y$ and $Q_{X|Y} \bettereq Q'_{X|Y}$.  By 
	following the same steps as earlier, we have that the exponent is
	\begin{equation*}
		e_{X|Y}^{\gtrless}(h)
		=
		\min_{Q : \max_{Q' \bettereq Q} H_{Q'}(X|Y) \gtreqless h} D(Q 
		\,\|\, P_{XY}),
	\end{equation*}
	where $Q, Q' \in \calP(\calX \times \calY)$ are joint distributions, 
	and $\gtreqless$ is $\geq$ when $\gtrless$ is $>$ and $\leq$ when 
	$\gtrless$ is $<$.  By the same reasoning as in the unconditional case, 
	this minimization problem is unchanged if we simplify the constraint to 
	$H_Q(X|Y) \gtreqless h$, i.e.,
	\begin{equation*}
		e_{X|Y}^{\gtrless}(h)
		=
		\min_{Q : H_Q(X|Y) \gtreqless h} D(Q \,\|\, P_{XY}).
	\end{equation*}
	
	Clearly, if $H(X|Y) \gtreqless h$, then $Q = P_{XY}$ is feasible and 
	the exponent is $0$.  Hence, it remains to solve
	\begin{equation*}
		\min_{Q : H_Q(X|Y) = h} D(Q \,\|\, P_{XY})
	\end{equation*}
	in the non-trivial case $h \gtrless H(X|Y)$.  We can solve this using 
	Lagrange multipliers to obtain that the optimizer is of the form 
	$Q(x,y) = Q(y)V(x|y)$ where $V$ is the conditional tilted distribution 
	$\tiltP{P_{X|Y}}{\lambda_h}$ and $Q(y) \propto P_Y(y) \left(\sum_x 
	P_{X|Y}(x|y)^{\lambda_h}\right)^{1/\lambda_h}$, where $\lambda_h$ is 
	such that the conditional entropy $H(X|Y)$ with this $Q$ is $h$.
\end{proof}

\begin{remark}
	A more illustrative proof of Theorem~\ref{thm: cond_exponent} can 
	perhaps be obtained using Theorem~\ref{thm: single_cond_exponent} by 
	simply averaging over $\Yvec = \yvec$, as $\Pr\{G(\Xvec|\Yvec) \gtrless 
	2^{nh}\}$
	\begin{equation*}
		= \sum_{\yvec} P_{\Yvec}(\yvec)\Pr\{G(\Xvec|\Yvec) \gtrless 2^{nh} 
		| \Yvec =\yvec\}.\vspace*{-7pt}
	\end{equation*}
	Since the terms in the sum only depend on the type of $\yvec$, we can 
	sum over all possible types $Q \in \calP_n(\calY)$ instead, with each 
	type having exponent $D(Q\|P_Y) + e^{\gtrless}_{X|Y}(h,Q) = $
	\begin{equation*}
		D(Q\|P_Y) + D(\tiltP{P_{X|Y}}{\lambda_{h,Q}} \| P_{X|Y} \mid Q) = 
		D(Q\tiltP{P_{X|Y}}{\lambda_{h,Q}} \| P_{XY}),
	\end{equation*}
	where $\lambda_{h,Q}$ satisfies 
	$H_{Q\tiltP{P_{X|Y}}{\lambda_{h,Q}}}(X|Y) = h$ as before. 
	Hence, the averaged conditional exponent is
	\begin{equation*}
		e^{\gtrless}_{X|Y}(h) = \min_{Q \in \calP(\calY)}\left[ D(Q\|P_Y) + 
		e^{\gtrless}_{X|Y}(h,Q) \right],
	\end{equation*}
	However, there does not seem to be an easier way to see that this 
	exponent is minimized at $Q = Q_{Y,\lambda_h}$ as in \eqref{eqn: 
		q_cond_exp}, other than resorting to Lagrange multipliers. 
	
	Further, note that for the averaged conditional exponent in 
	Theorem~\ref{thm: cond_exponent}, it is not a tilted joint distribution 
	$\tiltP{P_{XY}}{\lambda}$ for some appropriate $\lambda$ that 
	determines 
	the exponent, but rather, the joint distribution induced by the 
	conditional 
	tilted distribution $\tiltP{P_{X|Y}}{\lambda_h}$ together with some 
	marginal distribution $Q_{Y,\lambda_h}$. Thus, the important geodesic 
	here 
	is the family of conditional tilted distributions, similar to the 
	tilted 
	distributions in the unconditional case (Proposition~\ref{prop: 
		exponent}).
\end{remark}
\begin{remark}
	Note that the moment characterization tells us that 
	$\bbE[G(\Xvec|\Yvec)]$ grows exponentially in $n$ with the exponent 
	related to the R\'enyi entropy of order $1/2$. Nonetheless, in the 
	results above, we see that it is the Shannon entropy (which is the 
	R\'enyi entropy of order $1$) that marks the transition point: 
	$\Pr\{G(\Xvec|\Yvec) < 2^{nh}\}$ goes to 0 if $h < H(X|Y)$ and goes to 
	1 otherwise.  This is not surprising, as it can be shown that 
	$\lim_{n\to\infty}\frac1n\bbE[\log G(\Xvec|\Yvec)] = H(X|Y)$ (for 
	example, by taking $\rho \to 0$ in the moment characterization). 
\end{remark}

\section{Application to password guessing}
We now use Theorem~\ref{thm: cond_exponent} to obtain a quantitative 
estimate for a natural problem in security.
Suppose we have a password $\Xvec$ that is an $n$-length string generated 
i.i.d.~from distribution $P_X$, and suppose an attacker observes 
side-information $\Yvec$ through a memoryless channel $P_{Y|X}$. For 
example, $P_X$ is a Zipf law~\cite{zipf} on an alphabet $\{1,\dots,m\}$, 
i.e., $P_X(x) = x^{-s}/\sum_{x'} x'^{-s}$ with $s > 0$, and $P_{Y|X}$ is an 
$m$-ary erasure channel, i.e, $P_{Y|X}(y|x) = 1-\epsilon$ for $y = x$ and 
$\epsilon$ for $y$ equal to a special ``erasure'' symbol not in $\calX$.  
If our system is such that it allows a small, finite number of attempts at 
entering the password before locking out the user, then we wish to ensure 
that the attacker is, with high probability, unable to guess the password 
in a constant number of attempts.  For $\delta \to 0$, if we choose $n 
\gtrsim \log\frac{1}{\delta} / e^{<}_{X|Y}(\eta)$, then, we have that 
$\Pr\{G(\Xvec | \Yvec) < t_n \} < \delta$ for $t_n$ that grows slower than 
$2^{n\eta}$ for $\eta > 0$. 
Note that $e^{<}_{X|Y}(\eta) = 
D(Q_{Y,\lambda_{\eta}}\tiltP{P_{X|Y}}{\lambda_{\eta}} \| P_{XY})$, where 
$\lambda_{\eta} \to \infty$ as $\eta \to 0$.  Assuming that $\max_x 
P_{X|Y}(x|y)$ is unique, we can check that 
\begin{equation*}
	e^{<}_{X|Y}(\eta) \stackrel{\eta\to0}{\longrightarrow} -\log  
	\left(\sum_y P_Y(y) \max_x P_{X|Y}(x|y) \right),
\end{equation*}
which is exactly the conditional min-entropy~\cite{minentropy}.
For the Zipf source and erasure channel setting considered above, this 
expression evaluates to $-\log\left((1-\epsilon) + \epsilon/\sum_{i=1}^m 
i^{-s}\right)$.  To give a concrete example, let $\delta = 10^{-6}$, $s = 
0.8$, $m = 100$ and $\epsilon = 0.5$, then this exponent is approximately 
$0.83$.  We need $n \gtrsim \frac{\log(1/\delta)}{0.83} \approx 24$ to 
ensure that even if around half of the (i.i.d.) characters in the password 
are leaked, the attacker can succeed in guessing the right password in a 
small number of attempts only with probability $10^{-6}$. 

\section{Conclusion}
We give a simple method-of-types derivation of the tail exponents of 
conditional guesswork for i.i.d.~sources.  The exponent is characterized as 
the 
solution to an entropy-constrained divergence minimization problem, which 
yields a divergence involving conditional tilted distributions.  This lets 
us 
easily compute the exponent for use in applications, as we illustrate in a 
password guessing example.  Though our results only show asymptotic 
exponents, 
the method-of-types derivation can also give upper and lower bounds to the 
probability that match up to polynomial factors in $n$.  Another possible 
application is the analysis of guessing-based 
decoding~\cite{duffy_grand,duffy_srgrand,duffy_orbgrand,miyamoto_yang_universal,li_zhang_orbgrand},
where we are interested in quantifying the probability that the true noise 
sequence is not reached within a prescribed number of guesses.


\printbibliography

\end{document}